\newsavebox{\mybox}
\theoremstyle{plain}
\newtheorem{theorem}{Theorem}
\newtheorem{lemma}[theorem]{Lemma}
\theoremstyle{definition}
\newtheorem{example}{Example}
\newtheorem{remark}{Remark}
\def\BibTeX{{\rm B\kern-.05em{\sc i\kern-.025em b}\kern-.08em
    T\kern-.1667em\lower.7ex\hbox{E}\kern-.125emX}}
\begin{document}
\title{Explicit Construction of Minimum Bandwidth Rack-Aware Regenerating Codes
%\thanks{This work was supported in part by the National Key R\&D Program of China  (No. 2020YFA0712300) and NSFC (No. 61872353).}
}

\author{\IEEEauthorblockN{Liyang Zhou, ~Zhifang Zhang}\\
\IEEEauthorblockA{\fontsize{9.8}{12}\selectfont KLMM, Academy of Mathematics and Systems Science, Chinese Academy of Sciences, Beijing 100190, China\\
School of Mathematical Sciences, University of Chinese Academy of Sciences, Beijing 100049, China\\
Emails: zhouliyang17@mails.ucas.ac.cn,~~zfz@amss.ac.cn}

}\maketitle

\thispagestyle{empty}

\begin{abstract}
In large data centers, storage nodes are organized in racks, and the cross-rack transmission dominates the bandwidth cost. For the repair of single node failures, codes achieving the tradeoff between the storage redundancy and cross-rack repair bandwidth are called rack-aware regenerating codes (RRCs). In this work, we give the first explicit construction of RRCs with the minimum repair bandwidth (i.e., the cross-rack bandwidth equals the storage size per node). Our construction applies to all admissible parameters and has the lowest sub-packetization level. Moreover, the underlying finite fields are of size comparable to the number of storage nodes, which makes our codes more implementable in practice. Finally, for the convenience of practical use, we also establish a transformation to convert our codes into systematic codes.
\end{abstract}
\begin{IEEEkeywords}
regenerating code, rack-aware storage, clustered storage, repair bandwidth, erasure code
\end{IEEEkeywords}

\maketitle{}

\section{Introduction}\label{sec0}
Erasure codes are increasingly adopted in modern storage systems (such as Windows Azure Storage \cite{Huang2012WAS}, Facebook storage \cite{Facebook2013}, etc.) to ensure fault-tolerant storage with low redundancy. Meanwhile, efficient repair of node failures becomes an important issue in coding for distributed storage. An important metric for the repair efficiency is the repair bandwidth, i.e., the total number of symbols downloaded from the helper nodes for recovery of the data on failed nodes. The celebrated work \cite{Dimakis2011} initiated the study of regenerating codes that can minimize the repair bandwidth for given storage redundancy. By now fruitful results have been achieved in regenerating codes which are partly included in the survey \cite{Balaji2018Survey}.

The initial model for regenerating codes treats the bandwidth cost equally between all storage nodes. For simplicity, we call this model as the homogeneous model throughout. However, large data centers usually possess heterogeneous structure where the storage nodes are organized in racks (or clusters), so it permits to differentiate between communication within the rack (or cluster) and the cross-rack (or inter-cluster) downloads. Specifically, let $n=\bar{n}u$ and the $n$ nodes are organized in $\bar{n}$ racks each containing $u$ nodes. A data file consisting of $B$ symbols is stored across the $n$ nodes each storing $\alpha$ symbols such that any $k=\bar{k}u+u_0~(0\leq u_0<u)$ nodes can retrieve the data file. Suppose a node fails. A self-sustaining system is to generate a replacement node by downloading data from surviving nodes. We call the rack containing the failed node as the {\it host rack}. The node repair is based on the two kinds of communication below:
\begin{enumerate}
	\item \textbf{Intra-rack transmission.}
	All surviving nodes in the host rack transmit information to the replacement node.
	\item \textbf{Cross-rack transmission.}
	Outside the host rack, $\bar{d}$ helper racks each transmit $\beta$ symbols to the replacement node.
\end{enumerate}
Particularly when $u=1$, it degenerates into the homogenous model where regenerating codes have been well studied. Here we focus on the case $u>1$. In general, the cross-rack communication cost is much more expensive than the intra-rack communication cost, so it is reasonable to assume the nodes within each rack can communicate freely without taxing the system bandwidth. Consequently, the $\beta$ symbols provided by each helper rack are computed from the data stored in all nodes in that helper rack, and the repair bandwidth $\gamma$ only dependents on the cross-rack transmission, i.e., $\gamma=\bar{d}\beta$.

The above model of rack-aware storage was introduced in \cite{Hu} for studying the optimal node repair in hierarchical data centers. More specifically, the authors derived a lower bound on the repair bandwidth for codes with the minimum storage and also presented an existential construction of codes attaining this lower bound. Then, Hou et al. \cite{Hou} further explored the rack-aware storage model, both for the minimum-storage and minimum-bandwidth scenarios. Similarly to \cite{Dimakis2011}, Hou et al. firstly derived a cut-set bound for the rack-aware model and obtained a tradeoff between $\alpha$ and $\gamma$. The codes with parameters lying on the tradeoff curve are called rack-aware regenerating codes (RRCs). Two extreme points on the tradeoff curve respectively correspond to the minimum storage rack-aware regenerating (MSRR) code and minimum bandwidth rack-aware regenerating (MBRR) code. They also gave existential constructions of MSRR and MBRR codes over sufficiently large fields. Shortly after that, Chen and Barg \cite{Chen} extended the construction of MSR codes in the homogeneous model \cite{Ye2016sub-} to obtain the first explicit constructions of MSRR codes for all admissible parameters. Recently, the sub-packetization level of MSRR codes was further reduced \cite{Hou2020,Zhou2020}. However, explicit constructions of MBRR codes remain unsolved.

Meanwhile, some varieties of rack-aware regenerating codes were studied. In \cite{Prakash2018} the authors characterized the optimal repair bandwidth for the clustered storage system where the data reconstruction is realized by connecting to any $\bar{k}$ clusters rather than any $k$ nodes. The authors of \cite{Sohn2019} assumed a fixed ratio of the cross-cluster to intra-cluster bandwidth and  defined the repair bandwidth as the sum of cross-cluster and intra-cluster repair bandwidth. They constructed explicit codes \cite{Sohn2019arX} for minimizing the repair bandwidth when all the remaining nodes participate in the repair of one node failure. The paper \cite{Gupta} extended the rack-aware regenerating codes to the cooperative repair model with the assumption that the multiple node failures are evenly distributed in multiple racks.

\subsection{Our contribution}
In this work we focus on the rack-aware regenerating codes with the minimum repair bandwidth, i.e., the MBRR codes. According to \cite{Hou}, the MBRR code has parameters
\begin{equation}\label{bw-bound1}
\alpha=\bar{d}\beta=B\bar{d}\big/\big(k\bar{d}-\frac{\bar{k}(\bar{k}-1)}{2}\big)
\end{equation}
where $\bar{k}=\lfloor\frac{k}{u}\rfloor$. The authors of \cite{Hou} firstly constructed MBRR codes over the field of size larger than $ B{\sum_{i=1}^{\min\{k,\bar{n}\}}}\binom{n-\bar{n}}{k-i}\binom{\bar{n}}{i}$.
In their construction, a product-matrix structure of the MBR codes in \cite{Kumar2011} was used to ensure the optimal repair bandwidth, while the data reconstruction from any $k$ nodes was guaranteed by the invertibility of some related matrices. However, they failed to give explicit constructions of the corresponding matrices.
We conquered this problem through a nice merge of the multiplicative subgroup design into the product-matrix MBR codes. The multiplicative subgroup design was first adopted in \cite{LRCFamily2014} for ensuring local repairability in generating linear codes with the optimal distance, and then in \cite{Chen} for constructing MSRR codes from the parity-check matrix. Here we apply the design in the product-matrix structure, achieving the minimum bandwidth repair and $k$-reconstruction simultaneously in an explicit way.

%Thus its storage overhead equals $\frac{n\alpha}{B}=n/(k-\frac{\bar{k}(\bar{k}-1)}{2\bar{d}})$ which is close to the lowest possible overhead $\frac{n}{k}$ as $n$ grows large. For example, set $u\!=\!5, \bar{n}\!=\!30, n\!-\!k\!=\!6$ and $\bar{d}\!=\!28$, then the MBRR code achieves storage overhead around $1.15\times$.

Therefore, we obtain the first explicit construction of MBRR codes for all the admissible parameters considered in \cite{Hou} (i.e., $\bar{d}\geq \bar{k}$).  The codes are built over finite fields of size comparable to $n$ which is much smaller than that required in \cite{Hou}. Moreover,
our codes apply to the scalar case $\beta=1$ and thus have the smallest sub-packetization level $\alpha=\bar{d}$. Finally, for the convenience of practical use, we also establish a transformation to convert our codes into systematic codes.

Comparing with minimum bandwidth regenerating (MBR) codes in other storage models, the MBRR codes built in this paper have advantages in storage overhead, fault tolerance and some other aspects. In more detail,
\begin{enumerate}
 \item A drawback of the MBR code in the homogeneous storage model \cite{Dimakis2011} is that its storage overhead is always greater than $2$ which means the redundancy must have size at least as large as the data file. However, this restriction no longer exists for the MBRR codes. For example, let $u=5$, $n-k=6$ and $\bar{d}=\bar{n}-1$. When $\bar{n}\leq 51$, our MBRR codes can be built over GF$(2^8)$ with storage overhead $1.22$ as $\bar{n}=10$ and $1.13$ as $\bar{n}=40$.

 \item The clustered storage model considered in \cite{Prakash2018} requires the data reconstruction is achievable by connecting to any $\bar{k}$ clusters (i.e., racks) rather than any $k$ nodes. As a result, the regenerating codes in this model only have fault tolerance $\bar{n}-\bar{k}$. Namely, when $\bar{n}-\bar{k}+1$ node failures spread in distinct clusters, the data file gets lost. The value of $\bar{n}-\bar{k}$ is quite small if low storage overhead is desired. In contrast, our MBRR codes have fault tolerance $n-k\approx(\bar{n}-\bar{k})u$ and storage overhead near to $\bar{n}/\bar{k}$ for properly chosen parameters.

 \item The clustered storage model considered in \cite{Sohn2019} assumes a fixed ratio $\epsilon$ of the intra-cluster to cross-cluster bandwidth and  defined the repair bandwidth as the sum of cross-cluster and intra-cluster repair bandwidth. When $\epsilon=1$ it degenerates into the homogeneous storage model. When $\epsilon=0$, however, it differentiates from our rack-aware storage model in that no communication within each helper rack is considered, namely, each node in the helper racks uploads data directly to the replacement node, while in the rack-aware model a centralized processing is implemented within each helper rack before transmitting the help data. Therefore, the model in \cite{Sohn2019} tends to need larger cross-rack repair bandwidth than our rack-aware model. Moreover, for general values of $\epsilon$ ($0\leq \epsilon\leq 1$), it is difficult to design codes for minimizing the repair bandwidth. In \cite{Sohn2019arX} the authors constructed the MBR and MSR codes only in the very special case that all the remaining $n-1$ nodes participate in the repair of a single node failure. In contrast, our codes allows the number of helper racks $\bar{d}$ ranging from $\bar{k}$ to $\bar{n}-1$, introducing more flexibility in the repair process.

\end{enumerate}

The remaining of the paper is organized as follows. Section II presents the explicit construction of MBRR codes. Section III describes the transformation to systematic MBRR codes. Section IV concludes the paper.

\section{Explicit construction}\label{sec1}
First we introduce some notations. For nonnegative integers $m<n$, let $[n]=\{1,...,n\}$ and $[m,n]=\{m,m+1,...,n\}$. Throughout the paper we label the racks from $0$ to $\bar{n}-1$ and the nodes within each rack from $0$ to $u\!-\!1$. Moreover, we represent each node by a pair $(e,g)\in[0,\bar{n}\!-\!1]\times [0,u\!-\!1]$ where $e$ is the rack index and $g$ is the node index within the rack. Our MBRR codes are built over a finite field $F$ satisfying $u|(|F|-1)$ and $|F|>n$. The codes apply to the scalar case $\beta=1$. Therefore, the data file to be stored across $n$ nodes is composed of $B$ symbols from $F$ and each node stores $\bar{d}$ symbols.

Suppose $\beta=1$. According to (\ref{bw-bound1}) the MBRR codes have
 \begin{equation}\label{eqScalarMBRR}\alpha=\bar{d},~~~B=k\bar{d}-\frac{\bar{k}(\bar{k}-1)}{2}=(k\!-\!\bar{k})\bar{d}+\frac{\bar{k}(\bar{k}+1)}{2}+\bar{k}(\bar{d}\!-\!\bar{k})\;.\end{equation}
Next we describe the construction of MBRR codes in three steps.

{\bf Step 1.}
Define two sets $J_1=\{tu+u-1: t\in[0,\bar{d}-1]\}$ and $J_2=[0,k-1]-J_1$. It can be seen that $|J_1|=\bar{d}$ and $|J_2|=k-\bar{k}$. Moreover, denote $J=J_1\cup J_2$ which can be rewritten as \begin{equation}J=[0,k-1]\cup \{tu+u-1: t\in[\bar{k},\bar{d}-1]\}\;.\label{eqJ}\end{equation}

Then we put the $B$ symbols of the data file into a $\bar{d}\times(k-\bar{k}+\bar{d})$ matrix $M=(m_{i,j})_{i\in[0,\bar{d}-1],j\in J}$. Note the columns of $M$ are indexed by the set $J$. Let $M_1$ and $M_2$ denote the submatrices of $M$ restricted to the columns indexed by $J_1$ and $J_2$ respectively. Moreover, $M_1$ has the following form
\begin{equation}\label{ms2}
M_1=\begin{pmatrix}
S&T\\
T^\tau&0	
\end{pmatrix},	
\end{equation}
where $S$ is a $\bar{k}\times\bar{k}$ symmetric matrix with the upper-triangular half filled up with $\frac{\bar{k}(\bar{k}+1)}{2}$ symbols of the data file, $T$ is a $\bar{k}\times(\bar{d}-\bar{k})$ matrix filled up with $\bar{k}(\bar{d}-\bar{k})$ symbols of the data file, and $T^\tau$ denotes the transpose of $T$. As a result, $M_1$ is a $\bar{d}\times\bar{d}$ symmetric matrix containing $\frac{\bar{k}(\bar{k}+1)}{2}+\bar{k}(\bar{d}\!-\!\bar{k})$ symbols altogether. The matrix $M_2$ is a $\bar{d}\times(k-\bar{k})$ matrix containing the remaining $(k-\bar{k})\bar{d}$ symbols of the data file.
The construction of the matrix $M$ is displayed in Fig. \ref{fg1}.

\begin{figure}[ht]
\begin{center}
\includegraphics[width=0.7\columnwidth]{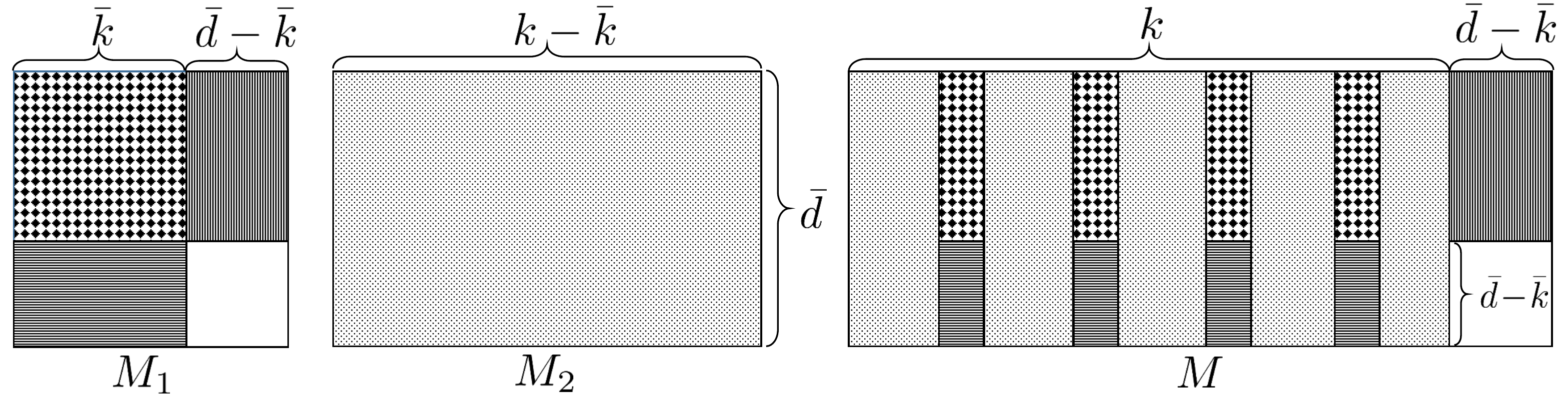}
\end{center}
\caption{Construction of the matrix $M$.}\label{fg1}
\end{figure}

{\bf Step 2.} For $i\in[0,\bar{d}-1]$ define polynomials
\begin{equation}f_i(x)\!=\!\sum_{j\in J}m_{i,j}x^j\!=\!\sum_{j=0}^{k-1}m_{i,j}x^j\!+\!\sum_{t=\bar{k}}^{\bar{d}-1}m_{i, tu\!+\!u\!-\!1}x^{tu\!+\!u\!-\!1}\label{polys}\;,\end{equation} where the second equality comes from the form of $J$ in (\ref{eqJ}).

Then each codeword of the MBRR code is composed of evaluations of the $\bar{d}$ polynomials at specially chosen points. Before going to Step 3, we use an example to further illustrate the first two steps.

\begin{example}\label{eg1}
Suppose $n=12, k=7, u=3, \bar{d}=3$ and $\beta=1$. According to \eqref{bw-bound1} the MBRR codes have
$\alpha=3, B=20$.	
In order to store $B=20$ data symbols $s_1,s_2,...,s_{20}$, in Step 1 we define a $3\times 8$ matrix $M$ as follows:
$$M\!=\!\begin{pmatrix}s_1&s_4&{\bm s_7}&s_{10}&s_{13}&{\bm s_8}&s_{18}&{\bm s_9}\\
s_2&s_5&{\bm s_8}&s_{11}&s_{14}&{\bm s_{16}}&s_{19}&{\bm s_{17}}\\
s_3&s_6&{\bm s_9}&s_{12}&s_{15}&{\bm s_{17}}&s_{20}&{\bm 0}
\end{pmatrix}\;,$$
\noindent where the symbols in bold face form the matrix $M_1$. Note $M_1$ is symmetric and $M$ contains redundant symbols due to the symmetry. Then in Step 2, we define three polynomials $f_0(x),f_1(x),f_2(x)$ such that
$$\begin{pmatrix}
  f_0(x)\\f_1(x)\\f_2(x)\end{pmatrix}=M\begin{pmatrix}
    1\\x\\\vdots\\x^{6}\\x^8
  \end{pmatrix}\;.
$$
\end{example}

{\bf Step 3. }
Let $\xi$ be a primitive element of $F$ and $\eta$ be an element of $F$ with multiplicative order $u$. Note $\xi$ and $\eta$ are fixed and publicly known. Then denote $\lambda_{(e,g)}=\xi^e\eta^g$ for $e\!\in\![0,\bar{n}\!-\!1],g\!\in\![0,u\!-\!1]$. It follows $\lambda_{(e,g)}\neq\lambda_{(e',g')}$ for $(e,g)\neq (e',g')$ because $(\eta^{g'-g})^u=1$ while $(\xi^{e-e'})^u\neq 1$ for $e\!\neq\! e'\!\in\![0,\bar{n}-1]$. That is, we select $n$ distinct elements $\lambda_{(e,g)}\in F$, where $(e,g)\in[0,\bar{n}-1]\times[0,u-1]$.
Finally we construct a code $C$ by letting the node $(e,g)$ store the $\bar{d}$ symbols: $f_0(\lambda_{(e,g)}),f_1(\lambda_{(e,g)}),...,f_{\bar{d}-1}(\lambda_{(e,g)})$.

In summary, the code $C$ can be expressed as
\begin{equation}\label{cons}C=M\Lambda\end{equation}
where $M$ is the $\bar{d}\times(k-\bar{k}+\bar{d})$ matrix in step 1, $\Lambda=(\lambda_{(e,g)}^j)$ with row index $j\in J$ and column index $(e,g)\in[0,\bar{n}-1]\times[0,u-1]$, and $C$ is a $\bar{d}\times n$ code matrix such that each node stores a column of $C$.

\vspace{4pt}
\begin{remark}
The code construction (\ref{cons}) coincides with the product-matrix framework proposed in \cite{Kumar2011}.
For simplicity, we call $M$ the message matrix, $\Lambda$ the encoding matrix and $C$ the code matrix.
Moreover,  a sub-matrix of $M$, i.e., $M_1$, has exactly the same form as the message matrix of the product-matrix MBR codes. The matrix $\Lambda$ uses the same multiplicative subgroup as in \cite{Chen}. Our trick mostly relies on the design of the set $J$, namely, the way of inserting $M_1$ into the remaining message matrix, which enables the code satisfy the minimum cross-rack repair bandwidth as well as the data reconstruction from any $k$ nodes.
\end{remark}

In the following we prove the code $C$ constructed in (\ref{cons}) is an MBRR code by showing it satisfies the data reconstruction and optimal repair property.

\subsection{Data Reconstruction}
We are to prove any $k$ nodes can together recover the data file. It is obvious that the data file is retrieved iff all entries of the matrix $M$ are recovered. We have the following theorem.

\begin{theorem}\label{thm1}
Given the code $C$  in (\ref{cons}), the matrix $M$ can be recovered from any $k$ columns of $C$.
\end{theorem}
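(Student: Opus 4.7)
The plan is to recover the entries of $M$ in two waves, exploiting two structural features of the construction: the zero block in the lower-right of $M_1$, and the symmetry $m_{i,tu+u-1}=m_{t,iu+u-1}$ of $M_1$. Let $\alpha_1,\ldots,\alpha_k$ denote the evaluation points corresponding to the $k$ chosen nodes. By Step 3 of the construction these are $k$ pairwise distinct elements of $F$, and from the $k$ stored columns the data collector knows $f_i(\alpha_\ell)$ for every $i\in[0,\bar{d}-1]$ and $\ell\in[k]$.

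First I would handle the rows with index $i\in[\bar{k},\bar{d}-1]$. For any such $i$ and any $t\in[\bar{k},\bar{d}-1]$, the coefficient $m_{i,tu+u-1}$ sits in the lower-right zero block of $M_1$ and hence vanishes. Consequently all monomials of $f_i$ whose exponents lie in $J_1\setminus[0,k-1]$ disappear, so $f_i$ is in fact a polynomial of degree at most $k-1$. The $k$ distinct evaluations $f_i(\alpha_1),\ldots,f_i(\alpha_k)$ then determine $f_i$ uniquely by Lagrange interpolation, yielding every $m_{i,j}$ for $j\in[0,k-1]$ and $i\in[\bar{k},\bar{d}-1]$.

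Next I would use the symmetry of $M_1$ to transfer this information to the upper rows. For each pair $(t,i)\in[0,\bar{k}-1]\times[\bar{k},\bar{d}-1]$ the identity $m_{t,iu+u-1}=m_{i,tu+u-1}$ expresses a high-degree coefficient of $f_t$ (one with exponent in $J_1\setminus[0,k-1]$) in terms of a low-degree coefficient of $f_i$ that was just recovered (note $tu+u-1\leq\bar{k}u-1<k$). Hence every coefficient of $f_t$ attached to an exponent in $J_1\setminus[0,k-1]$ becomes known for each $t\in[0,\bar{k}-1]$. Subtracting this known tail gives $\tilde{f}_t(x)=f_t(x)-\sum_{i=\bar{k}}^{\bar{d}-1}m_{t,iu+u-1}x^{iu+u-1}$, a polynomial of degree at most $k-1$ whose values at the $\alpha_\ell$ can be computed from the received data together with the coefficients just obtained. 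A second Lagrange interpolation at the $k$ distinct points $\alpha_1,\ldots,\alpha_k$ recovers the remaining $m_{t,j}$ for $j\in[0,k-1]$, and combining everything yields the full matrix $M$.

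The key conceptual hurdle is the very first observation: the specific shape of $J$ together with the zero block of $M_1$ conspires to make $f_i$ a polynomial of degree strictly less than $k$ for every $i\geq\bar{k}$, which is precisely what allows $k$ evaluations to suffice. Once this is noted the remaining work is routine polynomial interpolation and symmetry bookkeeping, so I do not anticipate any further serious obstacle beyond carefully confirming that the ``low-degree coefficients of $f_i$'' produced in the first wave are exactly the ``high-degree coefficients of $f_t$'' needed in the second.
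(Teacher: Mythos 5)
Your proposal is correct and matches the paper's own proof essentially step for step: both first observe that the zero block of $M_1$ forces $\deg f_i\leq k-1$ for $i\in[\bar k,\bar d-1]$ so that Lagrange interpolation recovers these rows, then invoke the symmetry $m_{i,tu+u-1}=m_{t,iu+u-1}$ to fill in the high-degree coefficients of $f_t$ for $t<\bar k$, and finally interpolate the residual degree-$\leq k-1$ polynomials. Your write-up merely spells out the symmetry identity and the degree bound $tu+u-1\leq\bar k u-1<k$ that the paper leaves implicit.
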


\begin{proof}
We equivalently prove the $\bar{d}$ polynomials defined in (\ref{polys}) can be recovered from any $k$ columns of $C$ because coefficients of these polynomials contain exactly all entries of $M$.

First, from the construction of $M$ as displayed in Fig \ref{fg1} one can see for $i\in[\bar{k},\bar{d}-1]$, the polynomial $f_i(x)$ has degree at most $k-1$. Thus from any $k$ columns (actually, just the bottom $\bar{d}-\bar{k}$ rows of these columns) of $C$ one can obtain evaluations of these $\bar{d}-\bar{k}$ polynomials at $k$ distinct points, and then recover the polynomials by Lagrange interpolation.  After that, by the symmetric structure of $M_1$, coefficients of the terms of degree greater than $k-1$ of the first $\bar{k}$ polynomials, i.e., $f_i(x)$ for $i\in[0,\bar{k}\!-\!1]$, are simultaneously obtained from the last $\bar{d}-\bar{k}$ polynomials. Thus recovery of the first $\bar{k}$ polynomials reduces to interpolating $\bar{k}$ polynomials of degree at most $k-1$. Therefore, one can recover the remaining $k$ coefficients of each of the first $\bar{k}$ polynomials from the $k$ columns of $C$.
\end{proof}

\begin{remark}
Actually, the explicit design that enables data reconstruction from any $k$ nodes is the most challenging part.
A natural idea is introducing an invertible sub-matrix in any $k$ columns of the encoding matrix, like the Reed-Solomon code satisfying the MDS property. However, this is not easy if the optimal repair property is simultaneously in concern. The work of \cite{Hou} only proved existence of these invertible sub-matrices over sufficiently large fields. Our trick is constructing some polynomials of degree at most $k-1$ whose coefficients are partly projected to the terms of higher degree of the remaining polynomials. Moreover, reuse of these coefficients enables optimal repair of single node failures.
\end{remark}

\subsection{Node Repair}
We are to prove any single node erasure can be recovered from all the remaining $u-1$ nodes in the same rack, as well as $\bar{d}$ helper racks each transmitting $\beta=1$ symbol.
In short, the node repair of our MBRR code relies on the node repair of the product-matrix MBR code \cite{Kumar2011} and the local repair property within each rack.

Firstly we illustrate the local repair property. For each rack $e\in[0,\bar{n}-1]$ and $i\in[0,\bar{d}-1]$, define a polynomial $h^{(e)}_i(x)=\sum_{j=0}^{u-1}h^{(e)}_{i,j}x^j$ where
\begin{equation}\label{poly}
h_{i,j}^{(e)}=\begin{cases}
\sum_{t=0}^{\bar{k}}m_{i,tu+j}\cdot\xi^{etu}\ \ \ \ \ \ \ \mathrm{if} \ 0\leq j<u_0\\
\sum_{t=0}^{\bar{k}-1}m_{i,tu+j}\cdot\xi^{etu}\ \ \ \ \ \ \ \mathrm{if} \ u_0\leq j<u-1\\
\sum_{t=0}^{\bar{d}-1}m_{i,tu+u-1}\cdot\xi^{etu}\ \ \ \ \mathrm{if} \ j=u-1
\end{cases}\;.
\end{equation}

\begin{lemma}\label{lemma} For all $e\in[0,\bar{n}-1]$ and $i\in[0,\bar{d}-1]$, it holds $f_i(\lambda_{(e,g)})=h^{(e)}_i(\lambda_{(e,g)})$ for all $g\in[0,u-1]$.
\end{lemma}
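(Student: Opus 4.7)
The plan is to verify the claimed equality by direct computation, exploiting the key algebraic fact that $\eta^u=1$, so that $\lambda_{(e,g)}^{tu+j'} = \xi^{e(tu+j')}\eta^{g(tu+j')} = \xi^{etu}\,\lambda_{(e,g)}^{j'}$ for any integer $t$ and any $j'\in[0,u-1]$. This is the only nontrivial ingredient; everything else is a matter of grouping the terms of $f_i$ according to residues mod $u$.

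First I would rewrite every exponent $j$ appearing in $f_i(x)=\sum_{j\in J}m_{i,j}x^{j}$ in the form $j=tu+j'$ with $j'\in[0,u-1]$, and determine the range of $t$ allowed for each fixed $j'$ from the explicit description $J=[0,k-1]\cup\{tu+u-1:t\in[\bar{k},\bar{d}-1]\}$ together with $k=\bar{k}u+u_0$. A straightforward case split yields: for $0\leq j'<u_0$, $t$ ranges over $[0,\bar{k}]$; for $u_0\leq j'<u-1$, $t$ ranges over $[0,\bar{k}-1]$; and for $j'=u-1$, the union with the extra set extends the range to $[0,\bar{d}-1]$. This matches precisely the three cases in the piecewise definition of $h^{(e)}_{i,j}$ in \eqref{poly}.

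Second, I would substitute $x=\lambda_{(e,g)}$ into $f_i$ and factor each term $m_{i,tu+j'}\lambda_{(e,g)}^{tu+j'}$ as $m_{i,tu+j'}\,\xi^{etu}\,\lambda_{(e,g)}^{j'}$ using the identity above. Grouping by $j'\in[0,u-1]$ and pulling $\lambda_{(e,g)}^{j'}$ out of the inner sum then gives
\begin{equation*}
f_i(\lambda_{(e,g)}) \;=\; \sum_{j'=0}^{u-1}\Bigl(\sum_{t}m_{i,tu+j'}\,\xi^{etu}\Bigr)\lambda_{(e,g)}^{j'},
\end{equation*}
where the inner sum ranges over exactly the set of $t$ identified in the first step. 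By inspection, the coefficient of $\lambda_{(e,g)}^{j'}$ agrees with $h^{(e)}_{i,j'}$ in each of the three cases, so the right-hand side is $\sum_{j'=0}^{u-1}h^{(e)}_{i,j'}\lambda_{(e,g)}^{j'}=h^{(e)}_i(\lambda_{(e,g)})$, as required.

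The only real obstacle is the bookkeeping in the case analysis for the range of $t$; in particular one must check that the endpoint $t=\bar{k}$ is indeed attained only when $j'<u_0$ (so that $tu+j'\leq k-1$), that for $j'=u-1$ the contribution from $[0,k-1]$ alone stops at $t=\bar{k}-1$ regardless of whether $u_0$ is zero or positive, and that the set $\{tu+u-1:t\in[\bar{k},\bar{d}-1]\}$ supplies precisely the missing values $t\in[\bar{k},\bar{d}-1]$ for $j'=u-1$. Once this matching is established, the lemma follows with no further work.
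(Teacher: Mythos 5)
Your proposal is correct and follows essentially the same route as the paper's proof: rewrite $j=tu+j'$, use $\lambda_{(e,g)}^{tu}=\xi^{etu}$ (from $\eta^u=1$), group terms of $f_i(\lambda_{(e,g)})$ by residue $j'$ mod $u$, and match the resulting inner sums against the three cases of $h^{(e)}_{i,j'}$ in~\eqref{poly}. You merely spell out the $t$-range bookkeeping a bit more explicitly than the paper does.
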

\begin{proof}
From \eqref{polys} it has $f_i(\lambda_{(e,g)})=\sum_{j\in J}m_{i,j}\lambda_{(e,g)}^j$. Next we rearrange the terms of $f_i(\lambda_{(e,g)})$ according to the values of $j\ {\rm mod}\ u$ for all $j\in J$. Specifically, denote $j=tu+\nu$, where $\nu= j\ {\rm mod}\ u$. Then
%For $j\in [0,k-1]$, write $j=tu+\nu$ where $t\in[0,\bar{k}]$ and $\nu\in[0,u-1]$. Particularly when $t=\bar{k}$, the largest value of $\nu$ is $u_0-1$ if $u_0>0$. Thus
\begin{equation*}
f_i(\lambda_{(e,g)})=\sum_{\nu=0}^{u_0-1}\sum_{t=0}^{\bar{k}}m_{i,tu+\nu}\lambda_{(e,g)}^{tu+\nu}+\sum_{\nu=u_0}^{u-2}\sum_{t=0}^{\bar{k}-1}
m_{i,tu+\nu}\lambda_{(e,g)}^{tu+\nu}\!+\!\sum_{t=0}^{\bar{d}-1}m_{i,tu+u-1}\lambda_{(e,g)}^{tu+u-1}.
\end{equation*}
Because $\lambda_{(e,g)}=\xi^e\eta^g$ and $\eta$ has multiplicative order $u$, it follows $\lambda_{(e,g)}^{tu}=\xi^{etu}$. By the definition in (\ref{poly}), one can easily verify $f_i(\lambda_{(e,g)})\!=\!\sum_{\nu=0}^{u-1}h_{i,\nu}^{(e)}\lambda_{(e,g)}^\nu\!=\!h_{i}^{(e)}(\lambda_{(e,g)})\;.	$
\end{proof}

\begin{remark}\label{Re2}
Lemma \ref{lemma} shows that for each rack $e\in[0,\bar{n}-1]$, when restricted to the $u$ nodes within rack $e$, the punctured code $C_e$ is actually defined by $\bar{d}$ polynomials of degree at most $u-1$, i.e., $h^{(e)}_i(x)$, $i\in[0,\bar{d}-1]$. If the leading coefficients of the $h^{(e)}_i(x)$'s are already known, then any single node erasure can be recovered from the remaining $u-1$ nodes in the same rack. We call this the local repair property. The idea of reducing the polynomial degree at each local group to ensure the local repair property has been used in constructing optimal locally repairable codes \cite{LRCFamily2014}. Here we further extend the idea to construct linear array codes combining with the product-matrix MBR codes \cite{Kumar2011} for storing the leading coefficients in the cross-rack level.
\end{remark}

\begin{lemma}\label{lemma2}
Consider the leading coefficients of the polynomials $h^{(e)}_i(x)$'s defined in (\ref{poly}). For $e\in[0,\bar{n}-1]$ denote $${\bm h}_e=(h_{0,u-1}^{(e)},h_{1,u-1}^{(e)},...,h_{\bar{d}-1,u-1}^{(e)})^\tau\in F^{\bar{d}}\;.$$
Then $({\bm h}_{0},{\bm h}_{1},...,{\bm h}_{\bar{n}-1})=M_1\Phi$ where \begin{equation}\label{eqphi}\Phi=\begin{pmatrix}1&1&\cdots&1\\1&\xi^{u}&\cdots&\xi^{(\bar{n}-1)u}\\\vdots&\vdots&\vdots&\vdots\\1&(\xi^{u})^{\bar{d}-1}&\cdots&(\xi^{(\bar{n}-1)u})^{\bar{d}-1}
\end{pmatrix}.\end{equation}
Therefore, $\{({\bm h}_{0},{\bm h}_{1},...,{\bm h}_{\bar{n}-1}):\mbox{all possible $M_1$'s as in (\ref{ms2})}\}$ forms an $(\bar{n},\bar{k},\bar{d})$ MBR code.
\end{lemma}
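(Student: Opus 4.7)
The plan is to compute the leading coefficient $h_{i,u-1}^{(e)}$ directly from its definition, recognize the resulting expression as a matrix entry of $M_1\Phi$, and then invoke the product-matrix MBR construction of \cite{Kumar2011} to conclude.

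First I would expand $h_{i,u-1}^{(e)}$ via the third case of \eqref{poly}:
\[
h_{i,u-1}^{(e)}=\sum_{t=0}^{\bar d-1}m_{i,tu+u-1}\cdot\xi^{etu}=\sum_{t=0}^{\bar d-1}m_{i,tu+u-1}\cdot(\xi^{eu})^{t}\;.
\]
Because $J_1=\{tu+u-1:t\in[0,\bar d-1]\}$, the entry $m_{i,tu+u-1}$ is precisely the $(i,t)$-entry of $M_1$. Hence the inner product above is the $(i,e)$-entry of $M_1\Phi$, and stacking over $i\in[0,\bar d-1]$ identifies $\bm h_e$ with the $e$-th column of $M_1\Phi$. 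This yields the claimed equality $(\bm h_0,\dots,\bm h_{\bar n-1})=M_1\Phi$.

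Next I would verify that $\Phi$ is a genuine Vandermonde matrix evaluated at $\bar n$ distinct points. The evaluation points are $\xi^{eu}$ for $e\in[0,\bar n-1]$; since $\xi$ is primitive of order $|F|-1$ and $u\mid(|F|-1)$, the element $\xi^{u}$ has multiplicative order $(|F|-1)/u\geq n/u=\bar n$, so the powers $\xi^{0},\xi^{u},\dots,\xi^{(\bar n-1)u}$ are pairwise distinct. Therefore $\Phi$ has the Vandermonde structure required by the product-matrix MBR construction.

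Finally, the symmetric block form of $M_1$ given in \eqref{ms2} (a $\bar k\times\bar k$ symmetric block $S$ together with a $\bar k\times(\bar d-\bar k)$ block $T$) is exactly the message matrix of the product-matrix MBR code of \cite{Kumar2011}, and the Vandermonde $\Phi$ is its encoding matrix. Consequently, $\{M_1\Phi:M_1\text{ as in }\eqref{ms2}\}$ is the product-matrix MBR code for parameters $(\bar n,\bar k,\bar d)$, which is known there to be an MBR code. I do not expect a genuine obstacle: the lemma is essentially a bookkeeping check that the leading coefficients on the cross-rack level reproduce the product-matrix MBR encoding, and the only technical point is the distinctness of the $\bar n$ evaluation points $\xi^{eu}$, which follows cleanly from the field-size conditions $|F|>n$ and $u\mid(|F|-1)$.
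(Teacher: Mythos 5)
Your proof is correct and follows essentially the same route as the paper's: the paper's one-line proof (``directly from the expression in (\ref{poly}) and the product-matrix construction of MBR codes in \cite{Kumar2011}'') is exactly the computation you carry out explicitly, identifying $h^{(e)}_{i,u-1}$ with the $(i,e)$-entry of $M_1\Phi$ and then invoking the product-matrix MBR framework. Your added check that $\xi^u$ has order $(|F|-1)/u\ge\bar n$, so the $\bar n$ Vandermonde evaluation points are distinct, is a worthwhile detail the paper leaves implicit.
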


\begin{proof}
The proof is directly from the expression in (\ref{poly}) and the product-matrix construction of MBR codes in \cite{Kumar2011}.
\end{proof}

\begin{theorem}\label{thm2}
Given the code $C$ constructed in (\ref{cons}), any single node erasure can be recovered from all the remaining $u-1$ nodes in the same rack, as well as $\bar{d}$ helper racks each transmitting $\beta=1$ symbol.

In other words, any column of $C$ indexed by $(e^*,g^*)$ can be recovered from the $u-1$ columns indexed by $\{(e^*,g):0\leq g\leq u-1,g\neq g^*\}$ and $\bar{d}$ symbols each of which is a linear combination of the entries of the punctured code $C_e$ for $\bar{d}$ helper racks $e\neq e^*$.
\end{theorem}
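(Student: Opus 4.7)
The plan is to decompose the repair into two pieces that dovetail: first a local Lagrange interpolation inside the host rack, and then a recovery of the leading coefficients $\bm{h}_{e^*}$ via the product-matrix MBR code uncovered in Lemma \ref{lemma2}. By Lemma \ref{lemma}, when restricted to rack $e^*$ the code is given by evaluating the $\bar{d}$ polynomials $h^{(e^*)}_i(x)$ of degree at most $u-1$ at the $u$ distinct points $\lambda_{(e^*,0)},\ldots,\lambda_{(e^*,u-1)}$. The $u-1$ surviving nodes in rack $e^*$ supply evaluations at $u-1$ of these points, so for each $i\in[0,\bar{d}-1]$ only one extra piece of information is needed to pin down $h^{(e^*)}_i(x)$; I will take that piece to be the leading coefficient $h^{(e^*)}_{i,u-1}$. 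Once the vector $\bm{h}_{e^*}$ is in hand, Lagrange interpolation yields $h^{(e^*)}_i(x)$ for every $i$, and evaluating at $\lambda_{(e^*,g^*)}$ reproduces $f_i(\lambda_{(e^*,g^*)})$ by Lemma \ref{lemma}, recovering the failed column.

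To obtain $\bm{h}_{e^*}$ from $\bar{d}$ helper racks, I would invoke the product-matrix MBR repair procedure of \cite{Kumar2011} applied to the $(\bar{n},\bar{k},\bar{d})$ MBR code $(\bm{h}_0,\ldots,\bm{h}_{\bar{n}-1})=M_1\Phi$ of Lemma \ref{lemma2}. Fix any $\bar{d}$ helper racks $e_1,\ldots,e_{\bar{d}}\neq e^*$ and let $\phi_e=(1,\xi^{eu},\ldots,(\xi^{eu})^{\bar{d}-1})^\tau$ denote the $e$-th column of $\Phi$. The $\ell$-th helper rack is asked to transmit $\phi_{e^*}^\tau\bm{h}_{e_\ell}=\phi_{e^*}^\tau M_1\phi_{e_\ell}$; by the symmetry of $M_1$ this equals $\bm{h}_{e^*}^\tau\phi_{e_\ell}$, so stacking the $\bar{d}$ received symbols yields $\bm{h}_{e^*}^\tau$ times the $\bar{d}\times\bar{d}$ Vandermonde matrix with nodes $\xi^{e_1u},\ldots,\xi^{e_{\bar{d}}u}$. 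These nodes are distinct because $\xi^u$ has multiplicative order $(|F|-1)/u\geq\bar{n}$ (using $u\mid(|F|-1)$ and $|F|>n$), so the Vandermonde matrix is invertible and $\bm{h}_{e^*}$ is uniquely determined.

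The last thing to check, and the main obstacle, is that the helper-rack contribution is actually a linear combination of the entries of the punctured code $C_e$, rather than requiring access to the global message matrix. By Lemma \ref{lemma}, every coefficient of $h^{(e)}_i(x)$, and in particular its leading coefficient $h^{(e)}_{i,u-1}$, can be written as a fixed $F$-linear combination of the $u$ evaluations $f_i(\lambda_{(e,0)}),\ldots,f_i(\lambda_{(e,u-1)})$ stored inside rack $e$: one simply inverts the $u\times u$ Vandermonde system at the distinct points $\lambda_{(e,0)},\ldots,\lambda_{(e,u-1)}$. Hence $\bm{h}_e$, and therefore the helper scalar $\phi_{e^*}^\tau\bm{h}_e$, is a linear function of the entries of $C_e$ that the helper rack computes internally before transmitting a single symbol across racks, confirming $\beta=1$ and completing the repair.
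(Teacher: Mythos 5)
Your proof is correct and follows essentially the same route as the paper: invoke Lemma \ref{lemma} to reduce the local problem to recovering $\bm{h}_{e^*}$, invoke Lemma \ref{lemma2} to recover $\bm{h}_{e^*}$ via the product-matrix MBR repair from $\bar{d}$ cross-rack symbols, and observe that each $\bm{h}_e$ is a linear combination of the columns of $C_e$. You simply spell out the details that the paper leaves to \cite{Kumar2011} (the symmetry of $M_1$ turning $\phi_{e^*}^\tau M_1\phi_{e_\ell}$ into $\bm{h}_{e^*}^\tau\phi_{e_\ell}$, the invertibility of the Vandermonde system with nodes $\xi^{e_\ell u}$, and the fact that inverting a $u\times u$ Vandermonde inside each helper rack produces the leading coefficients), so this is the same argument with the sub-steps filled in.
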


\begin{proof}
For any node erasure $(e^*,g^*)$ and any $\bar{d}$ helper racks $e_1,...,e_{\bar{d}}\in[0,\bar{n}-1]-\{e^*\}$, the MBR code proved in Lemma \ref{lemma2} implies that ${\bm h}_{e^*}$ can be recovered from $\bar{d}$ symbols ${\bm \lambda}_{e^*}^\tau{\bm h}_{e_i}, 1\leq i\leq \bar{d}$, where ${\bm \lambda}_{e^*}^\tau=(1,\xi^{e^*u},(\xi^{e^*u})^2,...,(\xi^{e^*u})^{\bar{d}-1})$.
Then by Lemma \ref{lemma}, ${\bm h}_{e_i}$ is a linear combination of the columns of $C_{e_i}$ for $1\leq i\leq \bar{d}$, and the erased column $C_{(e^*,g^*)}$ is a linear combination of ${\bm h}_{e^*}$ and the other $u-1$ columns of $C_{e^*}$. Thus the theorem follows.
\end{proof}

\section{Systematic MBRR codes}
For an $(n,k)$ regenerating code, if there exist $k$ nodes that store all data symbols in uncoded form, the code is called systematic and the $k$ nodes are called systematic nodes. In this section, we provide a transformation to convert the MBRR code constructed in (\ref{cons}) into a systematic code.

Without loss of generality, let the first $k$ nodes be the systematic nodes. Suppose $k=\bar{k}u+u_0$ where $0\leq u_0<u$, so the systematic nodes are all the nodes from rack $0$ to rack $\bar{k}-1$ plus $u_0$ nodes in rack $\bar{k}$. Denote the $B$ data symbols as $s_1,s_2,...,s_B$. Next we are to define the MBRR encoding map that maps $(s_1,...,s_B)$ to a $\bar{d}\times n$ code matrix $C$ such that the first $k$ columns of $C$ contain $s_1,...,s_B$. For simplicity, let $C_{[k]}$ denote the code matrix $C$ restricted to the first $k$ columns. Note from (\ref{eqScalarMBRR}) we know $B=k\bar{d}-\frac{\bar{k}(\bar{k}-1)}{2}$ which means $C_{[k]}$ contain $\frac{\bar{k}(\bar{k}-1)}{2}$ redundant symbols besides the $B$ data symbols. The idea is to determine the redundant symbols from the $B$ data symbols first and then recover the message matrix $\tilde{M}$ by Theorem \ref{thm1} such that $\tilde{M}\Lambda=C$. Thus the systematic encoding map is a composition of $(s_1,...,s_B)\rightarrow \tilde{M}$ and $\tilde{M}\Lambda$. Since $\tilde{M}$ has the same structure as displayed in Fig \ref{fg1}, the resulting code is still an MBRR code. The details are given below.

First we place the $B$ data symbols properly into $C_{[k]}$ except $\frac{\bar{k}(\bar{k}-1)}{2}$ entries which are for the redundant symbols. Specifically, label the columns of $C$ by $(e,g)\in[0,\bar{n}-1]\times[0,u-1]$ and rows by $i\in[0,\bar{d}-1]$, then the column indexed by $(e,u-1)$ for $e\in[0,\bar{k}-2]$ has redundant symbols in its $i$th row as $i\in[e+1,\bar{k}-1]$. The remaining positions are filled up with the data symbols in order. We illustrate the placement in Fig \ref{fg2}.
\begin{figure}[H]
\begin{center}
\includegraphics[width=0.65\textwidth]{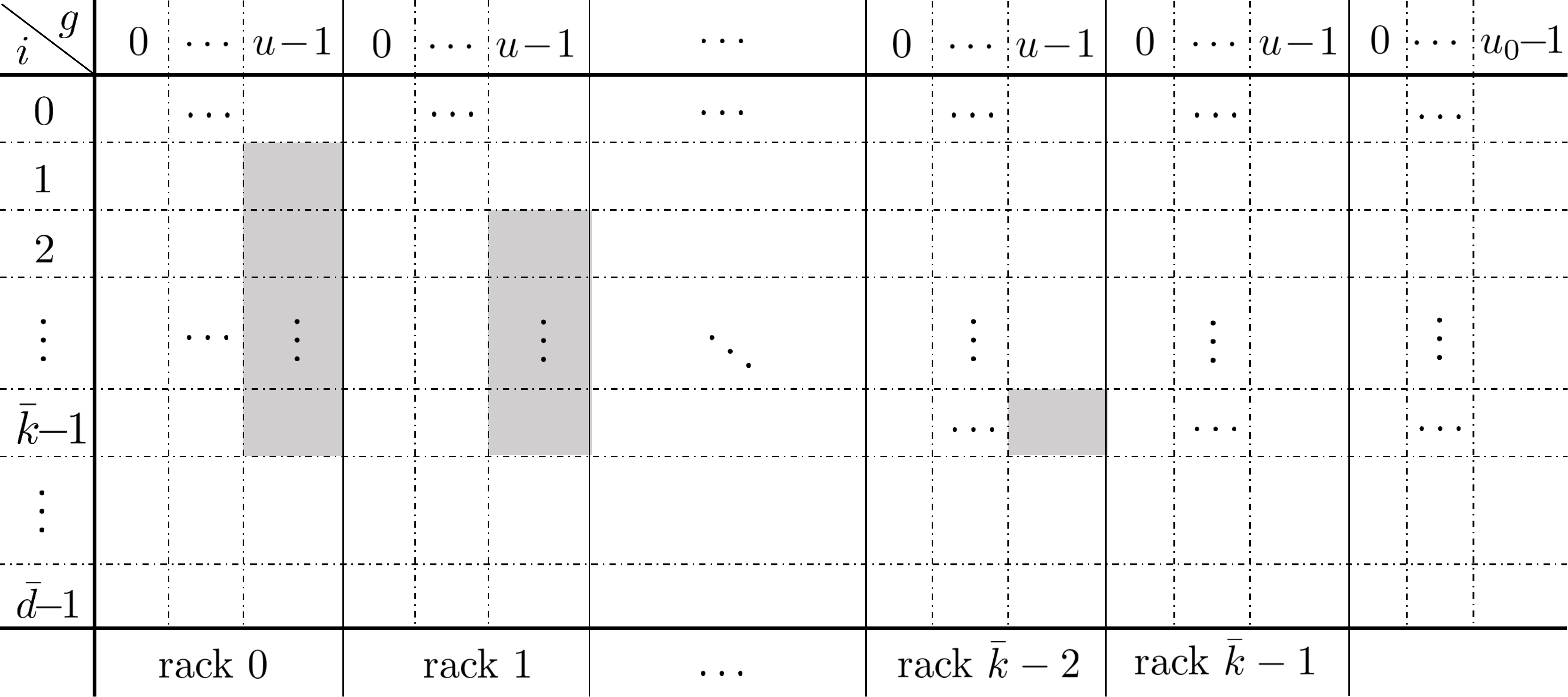}
\end{center}
\caption{An illustration of $C_{[k]}$. The shadowed positions are redundant symbols and the remaining positions are filled up with $B$ data symbols in order. }\label{fg2}
\end{figure}

Next we show the first $k$ columns of $C$ excluding the undetermined redundant symbols are sufficient to recover a message matrix $\tilde{M}$ such that $\tilde{M}$ has the same structure as displayed in Fig \ref{fg1} and $\tilde{M}\Lambda=C$. The key step is to recover the symmetric matrix $\tilde{M}_1$. By Lemma \ref{lemma} we know the code symbols in each row within each rack actually coincides with a local polynomial of degree at most $u-1$. Here we denote the local polynomial by $\tilde{h}_i^{(e)}(x)$. From Fig.\ref{fg2} one can see that for $e\in[0,\bar{k}-1]$, rack $e$ has no redundant symbols in its $i$th row for $i\in[0,e]\cup[\bar{k},\bar{d}-1]$, namely, these rows are already known from the data symbols.  As a result, one can interpolate the local polynomials $\tilde{h}_i^{(e)}(x)$ and obtain the leading coefficients $\tilde{h}_{i,u-1}^{(e)}$. These recovered $\tilde{h}_{i,u-1}^{(e)}$'s are listed in the right side of (\ref{eq9}). Moreover, suppose $$\tilde{M}_1=\begin{pmatrix}
  \tilde{S}& \tilde{T}\\\tilde{T}^\tau&0
\end{pmatrix}\;,$$ where $\tilde{S}$ is a $\bar{k}\times\bar{k}$ symmetric matrix and $\tilde{T}$ is a $\bar{k}\times(\bar{d}-\bar{k})$ matrix. Then by Lemma \ref{lemma2} it has
\begin{equation}
\begin{pmatrix}
  \tilde{S}& \tilde{T}\\\tilde{T}^\tau&0
\end{pmatrix}\begin{pmatrix}1&1&\cdots&1\\1&\xi^{u}&\cdots&\xi^{(\bar{k}-1)u}\\\vdots&\vdots&\vdots&\vdots\\1&(\xi^{u})^{\bar{d}-1}&\cdots&(\xi^{(\bar{k}-1)u})^{\bar{d}-1}
\end{pmatrix}=
\begin{pmatrix}
~\tilde{h}_{0,u-1}^{(0)}~&~\tilde{h}_{0,u-1}^{(1)}~&~\cdots~&~\tilde{h}_{0,u-1}^{(\bar{k}-2)}~&~\tilde{h}_{0,u-1}^{(\bar{k}-1)}~\\
*&\tilde{h}_{1,u-1}^{(1)}&\cdots&\tilde{h}_{1,u-1}^{(\bar{k}-2)}&\tilde{h}_{1,u-1}^{(\bar{k}-1)}\\
*&*&\cdots&\ddots&\vdots\\ *&*&\cdots&*&\tilde{h}_{\bar{k}-1,u-1}^{(\bar{k}-1)}\\
~\tilde{h}_{\bar{k},u-1}^{(0)}~&~\tilde{h}_{\bar{k},u-1}^{(1)}~&~\cdots~&~\tilde{h}_{\bar{k},u-1}^{(\bar{k}-2)}~&~\tilde{h}_{\bar{k},u-1}^{(\bar{k}-1)}~\\
\vdots&\vdots&\vdots&\vdots&\vdots\\
~\tilde{h}_{\bar{d}-1,u-1}^{(0)}~&~\tilde{h}_{\bar{d}-1,u-1}^{(1)}~&~\cdots~&~\tilde{h}_{\bar{d}-1,u-1}^{(\bar{k}-2)}~&~\tilde{h}_{\bar{d}-1,u-1}^{(\bar{k}-1)}~
\end{pmatrix}\;,\label{eq9}\end{equation}
where on the right side of (\ref{eq9}) only the leading coefficients that can be derived from the data symbols by now are written out and entries in the $*$ positions are viewed as unknowns. However, it is enough to recover $\tilde{T}$ and $\tilde{S}$ from the currently known leading coefficients. First, from the last $\bar{d}-\bar{k}$ rows in (\ref{eq9}) one can recover $\tilde{T}$ by multiplying the inverse of a $\bar{k}\times\bar{k}$ Vandermonde matrix. Then substituting the recovered entries of $\tilde{T}$ into the first row in (\ref{eq9}), one can obtain a linear system of equations of the first row entries of $\tilde{S}$. The coefficient matrix is again a $\bar{k}\times\bar{k}$ Vandermonde matrix. Thus one can recover the first row of $\tilde{S}$. Then go to the second row of (\ref{eq9}). Since $\tilde{S}$ is symmetric and its first row has been recovered, there are only $\bar{k}-1$ unknowns in the second row of $\tilde{S}$. Accordingly, the known entries in the second row of the right side of (\ref{eq9}) are enough to recover these unknowns. Continue this process and one can finally recover $\tilde{S}$ row by row. Thus we have proved the following theorem.
\begin{theorem}
The matrix $\tilde{M}_1$ can be uniquely determined by $\{\tilde{h}_{i,u-1}^{(e)}\mid e\in[0,\bar{k}-1], i\in[0,e]\cup[\bar{k},\bar{d}-1]\}$. Furthermore, each entry of $\tilde{M}_1$ can be expressed as a linear combination of the $B$ data symbols.	
\end{theorem}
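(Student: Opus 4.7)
The plan is to make precise the row-by-row recovery sketched in the paragraph preceding the theorem, to identify the linear systems that arise at each stage, and to verify that each is uniquely solvable. Partition equation (\ref{eq9}) as $\tilde{M}_1 \Phi_{[\bar{k}]} = H$, where $\Phi_{[\bar{k}]}$ denotes the first $\bar{k}$ columns of $\Phi$. Split $\Phi_{[\bar{k}]}$ horizontally into its top $\bar{k}\times\bar{k}$ block $V$ (a Vandermonde matrix in $1,\xi^u,\ldots,\xi^{(\bar{k}-1)u}$) and its bottom $(\bar{d}-\bar{k})\times\bar{k}$ block $U$. The block form of $\tilde{M}_1$ then expresses the top $\bar{k}$ rows of $\tilde{M}_1\Phi_{[\bar{k}]}$ as $\tilde{S}V+\tilde{T}U$ and the bottom $\bar{d}-\bar{k}$ rows as $\tilde{T}^\tau V$. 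This latter block equals the fully-known submatrix $H_{\mathrm{bot}}$ of $H$, so inverting the Vandermonde $V$ immediately yields $\tilde{T}^\tau$ and hence $\tilde{T}$.

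Next I would recover $\tilde{S}$ row by row for $i=0,1,\ldots,\bar{k}-1$. Having already computed $\tilde{T}$ and the previous rows of $\tilde{S}$, subtracting the now-known contribution of $\tilde{T}U$ leaves an equation for row $i$ of the form $\tilde{S}_{i,*}\,V=(\text{known vector})$, in which only the coordinates indexed by $e\in[i,\bar{k}-1]$ of the right-hand side are available by the theorem's hypothesis. Meanwhile, the symmetry $\tilde{S}_{i,j}=\tilde{S}_{j,i}$ combined with the already-recovered rows $0,\ldots,i-1$ eliminates the unknowns $\tilde{S}_{i,0},\ldots,\tilde{S}_{i,i-1}$, reducing row $i$ to a square $(\bar{k}-i)\times(\bar{k}-i)$ linear system whose coefficient matrix is the trailing principal submatrix of $V$ indexed by rows and columns in $[i,\bar{k}-1]$. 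The main obstacle is proving that this submatrix is invertible; I would handle it by factoring $\xi^{eui}$ out of the column indexed by $e\in[i,\bar{k}-1]$, exposing a genuine Vandermonde matrix in $\xi^{iu},\xi^{(i+1)u},\ldots,\xi^{(\bar{k}-1)u}$, which are distinct because $|F|>n$ and $u\mid(|F|-1)$ together force $\xi^u$ to have order $(|F|-1)/u\geq\bar{n}\geq\bar{k}$. Induction on $i$ then recovers $\tilde{S}$ uniquely and establishes the first assertion.

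For the second assertion, every operation in the procedure is a fixed $F$-linear map. Each available leading coefficient $\tilde{h}_{i,u-1}^{(e)}$ with $e\in[0,\bar{k}-1]$ and $i\in[0,e]\cup[\bar{k},\bar{d}-1]$ is produced by Lagrange interpolation from the $u$ code symbols of rack $e$ in row $i$, which by Fig.\ \ref{fg2} are all data symbols, so each such leading coefficient is a fixed linear combination of $s_1,\ldots,s_B$. The subsequent Vandermonde inversions are also $F$-linear, so composing these maps shows that every entry of $\tilde{M}_1$ is a fixed $F$-linear combination of $s_1,\ldots,s_B$, completing the proof.
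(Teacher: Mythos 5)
Your proof is correct and follows essentially the same row-by-row reduction as the paper: first recover $\tilde{T}$ from the bottom block of \eqref{eq9} by inverting a Vandermonde matrix, then recover $\tilde{S}$ row by row, using the symmetry of $\tilde{S}$ to shrink the number of unknowns to match the number of available leading coefficients. You go one step further than the paper by explicitly verifying that each shrinking coefficient matrix is invertible (factoring $\xi^{eui}$ from each column of the trailing $(\bar{k}-i)\times(\bar{k}-i)$ submatrix to expose a Vandermonde matrix in the distinct values $\xi^{iu},\ldots,\xi^{(\bar{k}-1)u}$), a detail the paper asserts without justification.
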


After recovery of $\tilde{M}_1$, we can fill up the matrix on the right side of (\ref{eq9}). Thus for rack $e\in[0,\bar{k}-1]$ and for each row $i\in[0,\bar{d}-1]$, we have obtained the leading coefficients of the local polynomials $h^{(e)}_i(x)$. Since all data symbols already cover values of each of these polynomials at $u-1$ points, one can easily derive the values at the $u$th points provided the leading coefficients are known. Thus all entries of $C_{[k]}$ have been recovered. Then by the data reconstruction property proved in Theorem \ref{thm1}, one can derive from $C_{[k]}$ the desired message matrix $\tilde{M}$. Through the process, we know each entry of $\tilde{M}$ can be expressed as a linear combination of the $B$ data symbols which defines a preprocess before the product-matrix encoding map in Section \ref{sec1} and finally leads to a systematic MBRR code.

%When $u=1$, in \cite{Kumar2011} the authors construct systematic version of MBR code by selecting specially encoding matrix $\Lambda$, i.e., setting $\tilde{C}=\tilde{M}_{[0,\bar{d}-1]\times[0,\bar{k}-1]}$, while we provide a more general approach to construct it.	

\section{Conclusions}
We explicitly construct regenerating codes that achieve the minimum cross-rack repair bandwidth in the rack aware storage model, i.e., MBRR codes. In general, our code is a specially designed polynomial code that combines the product-matrix MBR code construction for building linear array codes and the multiplicative subgroup structure for reducing the polynomial degree within each local group (i.e., rack). The construction framework developed here is helpful for constructing scalar RRCs (i.e., $\beta=1$) which is appealing in practice due to small sub-packetization. An interesting future work might be constructing MSRR codes along the product-matrix framework.

\vspace{0.3cm}

%\noindent{\bf Acknowledgements} This work was supported in part by the National Key R\&D Program of China  (No. 2020YFA0712300) and NSFC (No. 61872353).

\end{document}